\newtheorem{theorem}{Theorem}
\newtheorem{corollary}[theorem]{Corollary}
\newtheorem{lemma}[theorem]{Lemma}
\newcommand{\RR}{\ensuremath{\mathbb R}}  
\newcommand\T{\mathbb{T}}
\def\dart#1#2{#1\mathord\shortrightarrow#2}
\def\w{\widehat w_p}
\def\StackSPT{{\sc StackSPT}\xspace}
\def\DEF#1{\textbf{\emph{#1}}}
\begin{document}

\title{Stackelberg Shortest Path Tree Game, Revisited\thanks{This work has been partially financed 
		by the Slovenian Reseedgeh Agency, program P1-0297, project J1-4106, and
		within the EUROCORES Programme EUROGIGA (project GReGAS) of the European Science Foundation.}}

\author{Sergio Cabello\thanks{Department of Mathematics, IMFM, and 
				Department of Mathematics, FMF, University of Ljubljana, Slovenia.	
				email: {\tt sergio.cabello@fmf.uni-lj.si}}
}

\date{\today}

\maketitle

\begin{abstract}
Let $G(V,E)$ be a directed graph with $n$ vertices and $m$ edges.
The edges $E$ of $G$ are divided into two types: $E_F$ and $E_P$.
Each edge of $E_F$ has a fixed price. The edges of $E_P$ are the priceable
edges and their price is not fixed a priori. Let $r$ be a vertex of $G$.
For an assignment of prices to the edges of $E_P$, the revenue is given
by the following procedure: select a shortest path tree $T$ from $r$ with respect 
to the prices (a tree of cheapest paths); the revenue is the sum, over all priceable edges $e$,
of the product of the price of $e$ and the number of vertices below $e$ in $T$.

Assuming that $k=|E_P|\ge 2$ is a constant,
we provide a data structure whose construction takes $O(m+n\log^{k-1} n)$ time and
with the property that, when we assign prices to the edges of $E_P$,
the revenue can be computed in $(\log^{k-1} n)$.
Using our data structure, we save almost a linear factor when computing
the optimal strategy in the Stackelberg shortest paths tree game of 
[D. Bil{\`o} and L. Gual{\`a} and G. Proietti and P. Widmayer. 
Computational aspects of a 2-Player Stackelberg shortest paths tree game. Proc. WINE 2008]. 
\end{abstract}

\section{Introduction}

A \emph{Stackelberg game} is an extensive game with two players and perfect information in which 
the first player, the \emph{leader}, chooses her action and then the second player, the \emph{follower},
informed of the leader's choice, chooses her action; see~\cite[Section 6.2]{or-94}.
In a \emph{Stackelberg pricing game in networks}, the leader owns a subset of the edges in a network
and has to choose the price of those edges to maximize its revenue. The other edges of the network have
a price already fixed. The follower chooses a subnetwork of minimum price with a prescribed property, 
like for example being a spanning tree or spanning two vertices.
The revenue of the leader is determined by the prices of the edges that the follower uses in its chosen subnetwork, 
possibly combined with the amount of use of each edge.

Stackelberg network pricing games were first studied by Labb{\'e} et al~\cite{lms-98} 
when the follower is interested in a cheapest path connecting two given vertices. 
They showed that even such ``simple" problem is NP-hard when the number of priceable edges is not bounded.
There has been much follow up research; we refer the reader to the overview by van Hoesel~\cite{h-08}. 
The case when the follower is interested in a cheapest spanning tree was introduced by Cardinal et al.~\cite{cetal-11}.
Bil{\`o} et al.~\cite{bgpw-08} considered the case when the follower is interested in a shortest path tree from a prespecified root $r$
and the revenue of a priceable edge is the product of its price and the number of times such edge is used by paths from $r$ in the tree. 
This is the model we will consider.
We next provide the formal model in detail and explain our contribution.

\paragraph{The shortest path tree game.}
We next provide a description of the Stackelberg shortest path tree game.
In fact, we present it as an optimization problem, which we denote by \StackSPT.
The input consists of the following data:
\begin{itemize}
	\item A directed graph $G=(V,E)$ with $n$ vertices and $m$ edges.
	\item A partition of the edges $E$ into $E_F\cup E_P$.
		The edges of $E_P$ are the \DEF{priceable} edges and the edges of $E_F$ are the \DEF{fixed-cost} edges.
	\item A root $r\in V(G)$.
	\item A \DEF{demand} function $\phi:V(G)\rightarrow \RR_{\ge 0}$, where $\phi(v)$ tells the demand 
		of vertex $v$.
	\item A \DEF{cost} function $c:E_F\rightarrow \RR_{> 0}$ fixing the price of the edges in $E_F$.
\end{itemize}

An example is given in Figure~\ref{fig:example}.
A feasible solution is given by a \DEF{price function} $p:E_P\rightarrow \RR_{> 0}$.
The cost function $c$ and the price function $p$ define a weight function $w_p:E\rightarrow \RR_{\ge 0}$
over all edges by setting $w_p(e)=p(e)$ if $e\in E_P$ and $w_p(e)=c(e)$ if $e\in E_F$.
This weight function defines shortest paths in $G$. (In fact, they should be called cheapest paths in this context.)

For a price function $p$ and a path $\pi$, the revenue per unit along $\pi$ is 
\[
	\rho_u(\pi,p) ~:=~ \sum_{e\in E_P\cap E(\pi)} p(e).
\]
Note that only priceable edges contribute to the revenue.
Let $T$ be a subtree of $G$ containing paths from $r$ to all vertices.
For any vertex $v\in V(G)$, let $T[r,v]$ denote the path in $T$ from $r$ to $v$.
The \DEF{revenue} given by $T$ is
\[
	\rho(T,p) ~:=~ \sum_{v\in V(G)} \phi(v)\cdot \rho_u(T[r,v],p).
\]
We would like to tell that the revenue given by the price function $p$
is $\rho(T,p)$, where $T$ is a shortest path tree from $r$ with respect to $w_p$. 
However, there may be different shortest path trees $T$ with
different revenues. In such case, $T$ is taken as the shortest path tree that maximizes the revenue.
Although this assumption may seem counterintuitive at first glance, 
it forces the existence of a maximum and avoids the technicality of attaining revenues
arbitrarily close to a value that is not attainable.
Thus, the revenue of a price function $p$ is defined as
\begin{equation} \label{eq:rho}
	\rho(p) ~:=~ \max\{ \rho(T,p)\mid \mbox{$T$ a shortest path tree in $G$ with respect to $w_p$}\}.
\end{equation}
As an optimization problem, \StackSPT consists of finding a price function $p$ such that 
the revenue $\rho(p)$ is maximized.

From the point of view of game theory, the leader chooses the price function $p$ and the
follower chooses a tree $T$ containing paths from $r$ to all vertices. 
The payoff of the leader is $\rho(T,p)$.
The payoff of the follower is the sum, over all vertices $v$ of $G$, 
of the distance in $T$ from $r$ to $v$.
Among trees $T$ with the same payoff for the follower, she maximizes the revenue $\rho(T,p)$. 
Thus, the follower uses a lexicographic order where, as primary criteria, lengths are minimized, 
and, as secondary criteria, revenue is maximized.

\begin{figure}
\centering
	\includegraphics[scale=1.4,page=1]{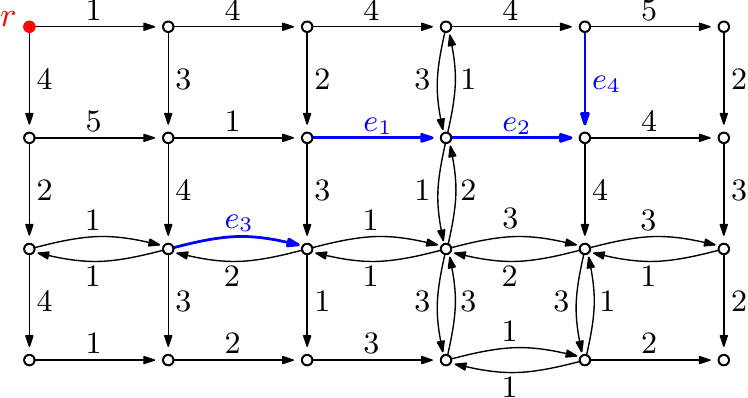}
	\caption{An example of a Stackelberg shortest path tree game. We assume that each vertex has unit demand.}
	\label{fig:example}
\end{figure}

\paragraph{Our result and comparison.}
We assume henceforth that $k:=|E_P|\ge 2$ is a constant. 
For $k=1$, \StackSPT can be solved in $O(m+n\log n)$ time as discussed by Bil\`o et al~\cite{bgpw-08}.

We describe a data structure that can be constructed in $O(m+n\log^{k-1} n)$ time and with the property 
that, given a price function $p$, the revenue $\rho(p)$ can be computed in $O(\log^{k-1} n)$ time.
Bil\`o et al.~\cite{bgpw-08} show how to find an optimal price function $p$
by evaluating the revenue of $O(n^k)$ price functions\footnote{They only discuss the
case when the demand function $\phi$ is identically $1$. 
However, their discussion can be easily adapted to more general demand functions.}.
Combined with our data structure, we can then find an optimal price function in $O(m+n^k\log^{k-1} n)$ time.

Our result matches the result of Bil\`o et al.~\cite{bgpw-08} for the case $k=2$. 
For $k\ge 3$, the algorithm of Bil\`o et al. uses $O(n^k(m + n\log n))$ time. 
A previous algorithm by van Hoesel et al.~\cite{hkmbo} to compute the optimal solution in a more general
Stackelberg pricing problem, where paths from different sources have to be considered, 
reduces \StackSPT to $O(n^{4^k})$ linear programs of constant size. 

The large dependency on $k$ is unavoidable because the problem is NP-hard for unbounded $k$.
Inapproximability results were shown by Joret~\cite{j-11}, and improved by Briest et al.~\cite{betal-10},
for the shortest path between two points. This is a special case of our model where the demand function $\phi$ is 
nonzero for a single vertex.
Briest et al.~\cite{bhk-12} provide an approximation algorithm for more general Stackelberg network pricing games.
When it is specialized to \StackSPT, it provides a $O(\log n)$-approximation.

Our data structure is based on three main ideas:
\begin{itemize} 
\item A careful rule to break ties when there are multiple shortest path trees. 
	With this rule, we can easily split the vertices into groups that use the same priceable edges.
\item Using a smaller network, of size $O(k^2)$,
	such that, for a given price function, we can find out the structure of the priceable edges in
	the shortest path tree of the network. This idea is similar to the
	\emph{shortest paths graph model} of Bouhtou et al.~\cite{bhkl}. 
\item Mapping each vertex of the network to a point in Euclidean $k$-dimensional space 
	in such a way that the vertices that use a certain subset of the priceable edges can be identified 
	as a subset of points in a certain octant. This allows us to use efficient data structures for range searching.
	Similar ideas have been used for graphs of bounded treewidth; 
	see~\cite{isaac2005,ck-2009,poster} and~\cite[Chapter 4]{thesis-shi}.
\end{itemize}

\paragraph{Notation.}
We use $e_1,e_2,\dots,e_k$ to denote the edges of $E_P$, where each edge $e_i=\dart{s_i}{t_i}$.
The enumeration of the edges is fixed; in fact we will use it to break ties.
Perhaps a bit misleading but quite useful, we will use $p(e)=0$ for each $e\in E_F$.
For a subset of vertices $U\subseteq V(G)$ we use the notation $\phi(U):=\sum_{u\in U} \phi(u)$.
For a subset of edges $F\subseteq E$ we use the notation $w_p(F):=\sum_{e\in F} w_p(e)$
and $p(F):=\sum_{e\in F} p(e) = \sum_{e\in F\cap E_F} p(e)$. 

A path $\pi$ will be treated sometimes as a sequence of vertices and sometimes as an edge set.
No confusion can arise from our use.
We use $E_P(\pi)$ for the set of priceable edges along $\pi$, that is, $E_P(\pi)= \pi\cap E_P$. 
Similarly, we use $E_F(\pi)=\pi\cap E_F$ for the fixed-cost edges.
Therefore $w_p(E_P(\pi))=p(\pi\cap E_P)$ and
$w_p(E_F(\pi)) = c(\pi\cap E_F)$.

For any two vertices $u$ and $v$ of $G$
we use $\pi_p(u,v)$ to denote a shortest path from $u$ to $v$ with respect to the weights $w_p$
and $d_p(u,v)$ to denote its weight.
We use $d_0$ as a shorthand for $d_p$ when $p=0$, that is, when the price function assigns price $0$ to each priceable edges.
We use $d_\infty$ as a shorthand for $d_p$ when $p=\infty$, that is, when the price function assigns price $\infty$ to each priceable edge.

For a path $\pi$ and vertices $u,v$ along $\pi$, we use $\pi[u,v]$ for the subpath of $\pi$ from $u$ to $v$.
Similarly, as we have used above, for a tree $T$ and vertices $u,v$, we use $T[u,v]$ to denote the subpath
of $T$ from $u$ to $v$.

\section{Range Searching}
\label{sec:prelim}

Let $X$ be a set of points in $\mathbb{R}^d$. 
Assume we are given a function $\varphi:X\rightarrow \mathbb{R}$
that assigns a weight $\varphi(x)$ to each point $x\in X$. 
We extend the weight function to any subset $Y$ of points by $\varphi(Y):=\sum_{x\in Y} \varphi(x)$. 
A \emph{rectangle} $R$ in $\mathbb{R}^d$ is the Cartesian product 
of $d$ intervals, $R=I_1\times\dots\times I_d$, where each interval $I_i$ can
include both extremes, one of them, or none.

Orthogonal range searching deals with the problem of preprocessing $X$
such that, for a query rectangle $R$, certain properties of $X\cap R$ can be efficiently reported.
We will use the following standard result.

\begin{theorem}[\cite{willard}]\label{theo:range1}
	Let $d\geq 2$ be a constant. 
	Given a set of $n$ points $X\subset\mathbb{R}^d$ and a weight function $\varphi:X\rightarrow \mathbb{R}$, 
	there is a data structure that can be constructed in $O(n\log^{d-1} n)$ time such that,
	for any query rectangle $R$, the weight $\varphi(X\cap R)$ can be reported in $O(\log^{d-1}n)$ time.
\end{theorem}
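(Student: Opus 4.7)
The plan is to prove this by induction on the dimension $d$, building the classical multi-level range tree and then removing the final $\log n$ factor from the query time by a fractional cascading argument.

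For the base case $d=1$, I would sort the points of $X$ by their single coordinate, store the weights $\varphi$ along the sorted array, and precompute prefix sums. A query interval is handled by two binary searches to locate its endpoints, followed by one subtraction of prefix sums, giving $O(n \log n)$ preprocessing and $O(\log n)$ query. For the inductive step from $d-1$ to $d$, I would build a balanced binary search tree $\mathcal{T}$ on the first coordinate of the points in $X$; at every node $v \in \mathcal{T}$ I would store, as a secondary structure, the $(d-1)$-dimensional data structure of the induction hypothesis built on the canonical subset $X_v \subseteq X$ of points whose first coordinate lies in $v$'s subtree. A query on a rectangle $R = I_1 \times R'$ decomposes $I_1$ into $O(\log n)$ canonical subsets via the usual root-to-leaves walk in $\mathcal{T}$; at each of these $O(\log n)$ nodes I invoke the $(d-1)$-dimensional secondary structure on the rectangle $R'$ and sum the returned weights.

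For the complexity bounds, I would argue that each point $x \in X$ belongs to the canonical subset of $O(\log n)$ nodes of $\mathcal{T}$, so by the inductive bound the total space is $O(n \log n \cdot \log^{d-2} n) = O(n \log^{d-1} n)$. For preprocessing, I would presort $X$ once by each coordinate in $O(n \log n)$ time; then the $(d-1)$-dimensional substructures can be built in a bottom-up sweep that reuses the sorted orderings (merging sorted children lists along each axis), so the preprocessing time matches the space bound $O(n \log^{d-1} n)$. A direct analysis of the query would give $T(n,d) = O(\log n) \cdot T(n, d-1)$ with $T(n,1) = O(\log n)$, hence $O(\log^d n)$, which is one logarithmic factor too slow.

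The main obstacle is precisely shaving this final $\log n$. I would address it by fractional cascading at the deepest level: instead of repeating an independent $O(\log n)$ binary search inside each of the $O(\log^{d-1} n)$ canonical nodes visited at the last dimension, I would thread the sorted last-coordinate arrays of a parent and its children together with bridge pointers, so that after one initial $O(\log n)$ search, each subsequent look-up at a descendant costs only $O(1)$. This reduces the total query cost to $O(\log^{d-1} n)$ while keeping space and preprocessing at $O(n \log^{d-1} n)$. Finally, although the data structure is traditionally phrased for counting or reporting, I would observe that the same construction handles weighted sums: it suffices to store the prefix sums of $\varphi$ along each sorted last-coordinate array, so that each of the $O(\log^{d-1} n)$ canonical ranges contributes its weight in $O(1)$ after the $O(\log^{d-1} n)$ navigation, yielding the claimed query time.
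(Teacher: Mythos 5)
Correct, and this is essentially the argument the paper relies on: the paper gives no proof of Theorem~\ref{theo:range1}, citing Willard, and your layered range tree with per-node prefix sums of $\varphi$ and fractional cascading in the last coordinate is exactly the standard construction behind that citation. One small imprecision: you need an initial $O(\log n)$ binary search at the root of each of the $O(\log^{d-2} n)$ secondary trees visited at the last-but-one level (not just one overall), but these searches still total $O(\log^{d-1} n)$, so the claimed bounds stand.
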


\section{Breaking Ties}
\label{sec:shortest}

Evaluating the revenue of a price function is easier in a \emph{generic case}, when there is a unique
shortest path from $r$ to each vertex of $V(G)$. In contrast, in the \emph{degenerate case}, there 
is at least one vertex $v$ with two distinct shortest paths from $r$ to $v$.
Unfortunately, the price functions defining the optimum are degenerate. This is easy to see because,
in a generic case, a slight increase in the price function leads to a slight increase in the revenue.

In our approach, we will count how many vertices use a given sequence of priceable edges. 
For this to work, we need a systematic way to break ties, that is, a rule to select, 
among the shortest path trees that give the same revenue, one. 
We actually do not go that far, and only care about the priceable edges on the paths of the tree.
 
We first discuss how to break ties among shortest paths, 
and then discuss how to break ties among shortest path trees.
Essentially, we compare paths lexicographically according to the following:
firstly, we compare paths by length; secondly, if they have the same length,
we compare them by revenue; finally, if they have the same length and revenue,
we compare the priceable edges on the path lexicographically,
giving preference to priceable edges of larger index. 
We next provide the details. 

Define the function $\chi:E\rightarrow \RR_{\ge 0}$ by
$\chi(e_i): = 2^i$, when $e_i\in E_P$, and $\chi(e):=0$ when $e\in E_F$.
We extend the function to subsets of edges by
defining 
\[
	\forall F\subseteq E :~~ \chi(F):=\sum_{e\in F}\chi(e) = \sum_{e_i\in F} 2^i.
\]
Note that, for any two subsets $F$ and $F'$ of priceable edges, $\chi(F)>\chi(F')$
if and only if the edge with largest index in the symmetric difference of $F$ and $F'$ comes from $F$. 
Moreover
\begin{equation}
	\forall F,F'\subseteq E_P: ~~ \chi(F)=\chi(F') \Longleftrightarrow F=F'. \label{eq:chi}
\end{equation}

Define the function 
\begin{align*}
	\w: E & \rightarrow \RR_{>0}\times \RR_{\le 0}\times \RR_{\ge 0} \\
			e	& \mapsto \left( w_p(e), -p(e), -\chi(e) \right)
\end{align*}
Recall that we had set $p(e)=0$ when $e\in E_F$. 
We extend $\w$ to subsets of edges by setting
\[
	\forall F\subseteq E :~~ \w(F) ~:=~ \sum_{e\in F} \w(e).
\]
We treat $\w$ as composite weights that are compared using the lexicographic order $\prec$.
We say that a path $\pi$ is $\w$-shorter than a path $\pi'$ if and only if 
$\w(\pi)\prec \w(\pi')$, where $\prec$ denotes
the lexicographic order. 

For any cycle $\alpha$, the first component of $\w(\alpha)$ is $w_p(\alpha)$, which is positive.
This implies that we do not have ``negative cycles" and we can use 
the weights $\w$ to define $\w$-shortest paths:
a path $\pi$ from $u$ to $v$ is \DEF{$\w$-shortest} if $\w(\pi)$ is minimal,
among the paths from $u$ to $v$, with respect $\prec$.
More compactly:
\[
	\mbox{$\pi$ from $u$ to $v$ is $\w$-shortest}~~\Longleftrightarrow~~ \forall \mbox{ paths $\pi'$ from $u$ to $v$}: ~ \w(\pi) \preceq \w(\pi').
\]

A tree $T$ is a \DEF{$\w$-shortest path tree} (from $r$) if
it contains a $\w$-shortest path from $r$ to each vertex.
Note that this is stronger than telling that $\w(E(T))$ is minimal with respect 
to $\prec$. See Figure~\ref{fig:tree} for an example.
A $\w$-shortest path tree can be computed be computed
in $O(m+ n\log n)$ time using Dijkstra's algorithm with the weights $\w$ and lexicographic comparison\footnote{If
one dislikes using lexicographic comparison, it is also possible to use
weights $w'(e)=w(e)-\varepsilon_1 p(e)- \varepsilon_2\chi(e)$, where $\varepsilon_1=\max_{e} w(e)/n^3$ and
$\varepsilon_2=\varepsilon_1/(2^k n^3)$.}.
(Here we need that $k$ is a constant, which implies that $\chi(F)$ uses $k=O(1)$ bits. 
For general $k$, the running time of Dijkstra's algorithm may get an additional dependence on $k$, depending on the model
of computation.) Note that there may be several $\w$-shortest path trees because of
different shortest paths without priceable edges.

\begin{figure}
\centering
	\includegraphics[scale=1.4,page=3]{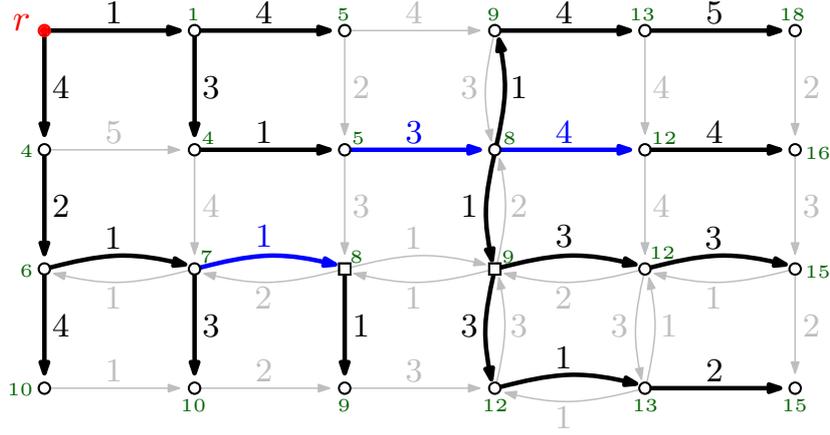}
	\caption{A $\w$-shortest path tree for the price function $p(e_1)=3$, $p(e_2)=p(e_4)=4$, $p(e_3)=1$ 
			in the network of Figure~\ref{fig:example}. The values in the vertices are the distance from $r$.
			Note that there are some vertices, 
			like for example the two that are marked with squares, for which there are different 
			shortest paths using different priceable edges, so we have to select shortest paths maximizing revenue.
			The revenue given by this tree, if each vertex has unit demand, is $p(e_1)\cdot 10 + (p(e_1)+p(e_2))\cdot 2 + p(e_3)\cdot 2=46$ units.}
	\label{fig:tree}
\end{figure}

\begin{lemma}\label{le:rho}
	If $T$ be a $\w$-shortest path tree,
	then $\rho(T,p)=\rho(p)$.
\end{lemma}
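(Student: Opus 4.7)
The plan is to verify the two halves of the equality. Write $\rho(p) = \max\{\rho(T',p) \mid T' \text{ is a shortest path tree w.r.t. } w_p\}$. I need to show that $T$ appears in this maximization set and that it attains the maximum.

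For the first part I would use the first coordinate of $\widehat w_p$. Since $\widehat w_p(e) = (w_p(e),-p(e),-\chi(e))$, the lexicographic order $\prec$ compares $w_p$-lengths first. Thus any $\widehat w_p$-shortest path from $r$ to $v$ is, in particular, a $w_p$-shortest path from $r$ to $v$. Hence $T$ is a shortest path tree with respect to $w_p$, so $\rho(T,p) \le \rho(p)$.

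For the second part, fix an arbitrary shortest path tree $T'$ with respect to $w_p$, and fix $v\in V(G)$. Both $\pi := T[r,v]$ and $\pi' := T'[r,v]$ are paths from $r$ to $v$ of minimum $w_p$-length, so $w_p(\pi)=w_p(\pi')$. Since $\pi$ is $\widehat w_p$-shortest, $\widehat w_p(\pi) \preceq \widehat w_p(\pi')$; because the first coordinates agree, the lexicographic comparison reduces to comparing the second coordinates, giving $-p(\pi) \le -p(\pi')$, i.e., $p(\pi) \ge p(\pi')$. Using the convention $p(e)=0$ for $e\in E_F$, we have $p(\pi) = \sum_{e\in E_P\cap \pi} p(e) = \rho_u(\pi,p)$, and similarly for $\pi'$. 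Thus $\rho_u(T[r,v],p) \ge \rho_u(T'[r,v],p)$ for every $v$. Multiplying by $\phi(v)\ge 0$ and summing over $v\in V(G)$ yields $\rho(T,p) \ge \rho(T',p)$. Taking the maximum over $T'$ gives $\rho(T,p) \ge \rho(p)$, completing the proof.

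The argument is essentially immediate once one unwinds the lexicographic definition of $\widehat w_p$; the only subtle point is to make explicit that the second coordinate of $\widehat w_p(\pi)$ equals $-\rho_u(\pi,p)$, which is where the convention $p(e)=0$ for fixed-cost edges plays its role. The third coordinate $-\chi(e)$ is not used for this lemma; it will matter later, when one needs to uniquely identify which subset of priceable edges each tree-path traverses, but for the present statement the first two coordinates suffice.
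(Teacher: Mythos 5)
Your proof is correct and follows essentially the same route as the paper: observe that a $\w$-shortest path tree is in particular a $w_p$-shortest path tree, then use that equal first coordinates in the lexicographic comparison force $p(T[r,v])\ge p(T'[r,v])$ for every $v$, and sum with the demands. The only cosmetic difference is that you compare against an arbitrary shortest path tree $T'$ and take the maximum, while the paper compares directly against a tree $T^*$ attaining $\rho(p)$; the substance is identical.
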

\begin{proof}
	Since $T$ is a $\w$-shortest path tree, it is also a shortest path tree for the weights $w_p$.
	By the definition of $\rho(p)$ given in equation~(\ref{eq:rho}), we have $\rho(T,p)\le \rho(p)$.
	We next show that $\rho(T,p)\ge \rho(p)$, which implies that $\rho(T,p) = \rho(p)$.
	
	Consider a shortest path tree $T^*$ that defines the value $\rho(p)$.
	That is, $\rho(T^*,p)=\rho(p)$.
	Since $T$ is a $\w$-shortest path tree, we have
	\begin{equation}\label{eq:T1}
		\forall v\in V(G):~~  \w (T[r,v])\preceq \w(T^*[r,v]).
	\end{equation}
	Since $T$ and $T^*$ are both shortest path trees, we have 
	\begin{equation}\label{eq:T2}
		\forall v\in V(G):~~   w_p(T[r,v])= w_p(T^*[r,v]).
	\end{equation}
	Expanding the definition of $\w$, from (\ref{eq:T1}) and (\ref{eq:T2}) 
	we obtain
	\[
		\forall v\in V(G):~~   p(T[r,v]) \ge p(T^*[r,v]).
	\]
	This means that 
	\[
		\rho(p) ~=~ \rho(T^*,p) ~=~ \sum_{v\in V(G)} \phi(v)\cdot p(T^*[r,v])
					 ~\le~ \sum_{v\in V(G)} \phi(v)\cdot p(T[r,v])
					 ~=~ \rho(T,p).
	\]
\end{proof}

\section{Reduced trees and sequences of priceable edges}
\label{sec:reduced}

Consider a price function $p$. Let $T$ be a $\w$-shortest path tree from $r$.
The \DEF{$\w$-reduced tree} $RT$ is obtained from $T$ by contracting
all the fixed-cost edges $E_F\cap E(T)$. The resulting graph is a tree with
edge set $E_P\cap E(T)$. When considering $RT$, we disregard the prices $p$ and the orientation
of the edges, and consider it as a rooted, unweighted, undirected graph with distinct labels $e_1,\dots,e_k$ on its edges.
In general, we will use $RH$ to denote the reduced graph obtained from a graph $H$ by contracting all non-priceable edges.
The $\w$-reduced tree for the example of Figure~\ref{fig:tree} contains the edges $e_1$ and $e_3$ adjacent to $r$
and the edge $e_2$ below $e_1$.

We first show that the $\w$-reduced trees are independent of the $\w$-shortest path
tree that is used.

\begin{lemma}\label{le:reduced}
	If $T$ and $T'$ are $\w$-shortest path trees, then $RT=RT'$.
\end{lemma}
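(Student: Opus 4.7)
The plan is to reduce equality of the reduced trees to equality of the functions $v\mapsto E_P\cap T[r,v]$ and $v\mapsto E_P\cap T'[r,v]$, and then argue that the reduced tree of a $\widehat w_p$-shortest path tree is completely determined by this function.

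First, I would observe that for every vertex $v$, the paths $T[r,v]$ and $T'[r,v]$ are both $\widehat w_p$-shortest paths from $r$ to $v$ (by definition of a $\widehat w_p$-shortest path tree), so they attain the same lexicographic minimum and in particular $\widehat w_p(T[r,v]) = \widehat w_p(T'[r,v])$. Matching the third coordinate, this gives $\chi(T[r,v]) = \chi(T'[r,v])$, and then equation~(\ref{eq:chi}) forces
\[
	E_P \cap T[r,v] \;=\; E_P \cap T'[r,v].
\]
As an immediate consequence, $E_P\cap E(T)=E_P\cap E(T')$: any priceable edge $e_i=\dart{s_i}{t_i}$ that lies in $T$ also lies in $T'[r,t_i]$, and since $e_i$ ends at $t_i$ it must be the last edge of that path, so it belongs to $T'$.

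Next I would prove the key combinatorial lemma: in any $\widehat w_p$-shortest path tree $T$, two vertices $u$ and $w$ lie in the same contraction class of $RT$ if and only if $E_P\cap T[r,u]=E_P\cap T[r,w]$. The ``only if'' direction is immediate since contracting fixed-cost edges does not change the set of priceable edges encountered on the way from $r$. For the converse, let $\ell$ be the lowest common ancestor of $u$ and $w$ in $T$; the two branches $T[\ell,u]$ and $T[\ell,w]$ are edge-disjoint, yet the equality of priceable-edge sets forces $E_P\cap T[\ell,u]=E_P\cap T[\ell,w]$, so both intersections must be empty. Hence $u$, $w$ and $\ell$ lie in a single connected component of the fixed-cost subgraph of $T$, and therefore in a single class of $RT$.

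Combining the two steps, the vertex set of $RT$ (as a partition of $V(G)$) is exactly the family of level sets of the map $v\mapsto E_P\cap T[r,v]$, which coincides with the analogous family for $T'$. The labeled edges of $RT$ are the members of $E_P\cap E(T)=E_P\cap E(T')$, each $e_i$ joining the class of $s_i$ to the class of $t_i$; and the root is the class of $r$, characterized by the empty priceable set. All these data are identical for $T$ and $T'$, so $RT=RT'$ as rooted edge-labeled trees. The main obstacle I expect is the converse in the class characterization above, which is the only place where the tree structure (rather than raw set equality) is really used; everything else is bookkeeping with $\widehat w_p$ and~(\ref{eq:chi}).
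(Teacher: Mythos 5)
Your proof is correct, and its first half is exactly the paper's: from $\w(T[r,v])=\w(T'[r,v])$ you read off the third coordinate to get $\chi(T[r,v])=\chi(T'[r,v])$ and apply~(\ref{eq:chi}) to conclude $E_P\cap T[r,v]=E_P\cap T'[r,v]$ for every vertex $v$ (your observation that this already gives $E_P\cap E(T)=E_P\cap E(T')$, because a priceable edge $\dart{s_i}{t_i}$ of $T$ is the last edge of $T[r,t_i]$, is the same little argument the paper only records after the lemma as a ``useful consequence''). Where you genuinely diverge is in how this per-vertex equality is converted into $RT=RT'$. The paper stays at the level of the priceable edges: it notes that $e_i$ being a descendant of $e_j$ in $T$ forces $e_j\in E_P(T[r,s_i])$ and $e_j\in E_P(T[r,t_i])$, hence the same membership for $T'$, so the descendant relation on the common labeled edge set coincides, and a rooted tree on a fixed labeled edge set is determined by that relation. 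You instead characterize the contraction classes themselves: via the LCA argument you show two vertices lie in the same class of $RT$ if and only if they have the same set $E_P\cap T[r,\cdot]$ (the converse being the only delicate step, and your handling is sound: the disjoint decompositions along $T[r,\ell]$ force the two edge-disjoint branches $T[\ell,u]$ and $T[\ell,w]$ to carry no priceable edge, so $u,\ell,w$ sit in one fixed-cost component). This costs a bit more work but buys a slightly stronger conclusion: the reduced trees agree even as partitions of $V(G)$ into contraction classes, not merely as isomorphic rooted edge-labeled trees, which is essentially the form of invariance the paper later needs anyway when it shows the sets $V_T(e_i,p)$ are independent of $T$ (Lemma~\ref{le:equal}).
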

\begin{proof}
	Since both $T$ and $T'$ are $\w$-shortest path trees we have
	\[
		\forall v\in V(G):~~ \w (T[r,v])= \w(T'[r,v]),
	\]
	which means
	\begin{align*}
		\forall v\in V(G):~~  w_p(T[r,v])&= w_p(T'[r,v]), \\
							  p(T[r,v])&= p(T'[r,v])), \\
							  \chi (T[r,v])&= \chi (T'[r,v]).
	\end{align*}
	From the last equality and the property (\ref{eq:chi}) we have 
	\[
		\forall v\in V(G):~~ E_P(T[r,v])= E_P(T'[r,v]).
	\]
	If $e_i=\dart{s_i}{t_i}$ is a descendant of $e_j$ in $T$, this means
	that $e_j\in E_P(T[r,s_i])$ and $e_j\in E_P(T[r,t_i])$. But then for $T'$ 
	we also have $e_j\in E_P(T'[r,s_i])$ and $e_j\in E_P(T'[r,t_i])$,
	which implies that $e_i$ is a descendant of $e_j$ in $T'$. 
	By symmetry, we conclude that $e_i$ is a descendant of $e_j$ in $T$ if
	and only if $e_i$ is a descendant of $e_j$ in $T'$. This implies
	that the $\w$-reduced trees $RT$ and $RT'$ are the same. 
\end{proof}

A useful consequence of this is that any two $\w$-shortest path trees have the 
same subset of priceable edges.

\begin{lemma}\label{structure1}
	In $O(m+n\log n)$ time we can construct a data structure with the property
	that, for any given price function $p$, we can compute in $O(1)$ time the $\w$-reduced tree $RT$.
\end{lemma}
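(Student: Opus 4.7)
The plan is to reduce the computation of $RT$ to a $\w$-shortest path computation in an auxiliary graph $H$ whose size depends only on $k$. Take as vertex set $V_H := \{r\} \cup \{s_i, t_i \mid 1 \le i \le k\}$, the $O(k)$ ``key'' vertices of $G$. The arcs of $H$ will be of two kinds: for each $i$, the priceable arc $s_i \to t_i$ (with $\w$-weight to be set at query time); and, for every ordered pair $u,v \in V_H$ with $v \ne r$, a \emph{compressed} arc $u \to v$ of $\w$-weight $(d_F(u,v),0,0)$, where $d_F(u,v)$ denotes the length of a shortest $u$-to-$v$ path in $G$ that uses only edges of $E_F$ (setting $d_F(u,v):=\infty$ if no such path exists).

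For preprocessing I would compute $d_F(u,v)$ for all $u,v \in V_H$ by running Dijkstra's algorithm on the fixed-cost subgraph $(V(G), E_F)$ from each key vertex. Since $|V_H| = O(k) = O(1)$, this fits within the $O(m + n\log n)$ budget.

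At query time, given $p$, I would set the $\w$-weight of each priceable arc $s_i \to t_i$ of $H$ to $(p(e_i), -p(e_i), -2^i)$ and then compute a $\w$-shortest path tree $T_H$ from $r$ in $H$ by Dijkstra with lexicographic comparison. Since $|V_H|$ and $|E_H|$ are both $O(1)$, this costs $O(1)$ time. The reduced tree $RT$ is extracted from $T_H$ by keeping only the priceable arcs of $T_H$ and declaring the parent in $RT$ of each such $e_i$ to be the last priceable arc encountered on the $r$-to-$s_i$ path in $T_H$ (or the root supernode if no priceable arc precedes $e_i$).

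Correctness rests on a path-decomposition argument. Any $r$-to-$v$ path $\pi$ in $G$ with $v \in V_H$ splits uniquely into an alternation of priceable arcs and maximal fixed-cost subpaths whose endpoints lie in $V_H$; replacing each fixed-cost subpath by the corresponding compressed arc of $H$ produces a path in $H$ of $\w$-weight at most $\w(\pi)$, with equality whenever the subpath was $d_F$-shortest. Conversely, each path in $H$ expands back to a path in $G$ of the same $\w$-weight. Hence the $\w$-shortest $r$-to-$V_H$ paths in $H$ and in $G$ agree on their priceable-arc skeletons, and by Lemma~\ref{le:reduced} this skeleton together with its ancestry already determines $RT$ uniquely. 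I expect the main obstacle to be pinning down the tie-breaking alignment: one must check that replacing an arbitrary $d_F$-shortest subpath by a single compressed arc of $\w$-weight $(d_F(u,v),0,0)$ never perturbs the lexicographic comparison against competing paths. This holds because the last two coordinates of $\w$ vanish on any fixed-cost subpath and the priceable arcs of $H$ inherit exactly their $G$-weights, so the lexicographic comparisons performed by Dijkstra in $H$ faithfully mirror those that would be performed in $G$.
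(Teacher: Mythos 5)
Your construction is essentially the paper's own: the paper builds exactly this constant-size ``model graph'' on $\{r\}\cup\{s_1,t_1,\dots,s_k,t_k\}$ with non-priceable arcs weighted by the fixed-cost distances $d_\infty(u,v)$, runs lexicographic Dijkstra on it at query time, and proves $RT=R\tilde T$ by the same compress/expand comparison of $\w$-values, using the $\chi$-coordinate to recover the priceable-edge sets and hence the ancestry relation. Your only deviations (compressed arcs between all ordered pairs of key vertices rather than just from $r$ and the $t_i$, and citing the Lemma~\ref{le:reduced} argument rather than redoing it for the auxiliary graph) are harmless, so the proposal is correct and follows the same route.
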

\begin{proof}
	We construct the \emph{model graph} $\tilde G=\tilde G(G,E_P,c,r)$, as follows. 	
	The vertex set of $\tilde G$ consists of $r$ and the endpoints of the priceable edges. 
	Thus $V(\tilde G)=\{r \}\cup \{ s_1,t_1,\dots,s_k,t_k\}$.
	In $\tilde G$, we have edges from $r$ to any other vertex.
	Furthermore, for each priceable edges $e_i$ and $e_j$, $i\not= j$,
	we have an edge from $t_i$ to $s_j$ and to $t_j$. 
	Finally, we have the edges $e_1,\dots, e_k$ themselves.
	 
	\begin{figure}
	\centering
		\includegraphics[width=\textwidth,page=2]{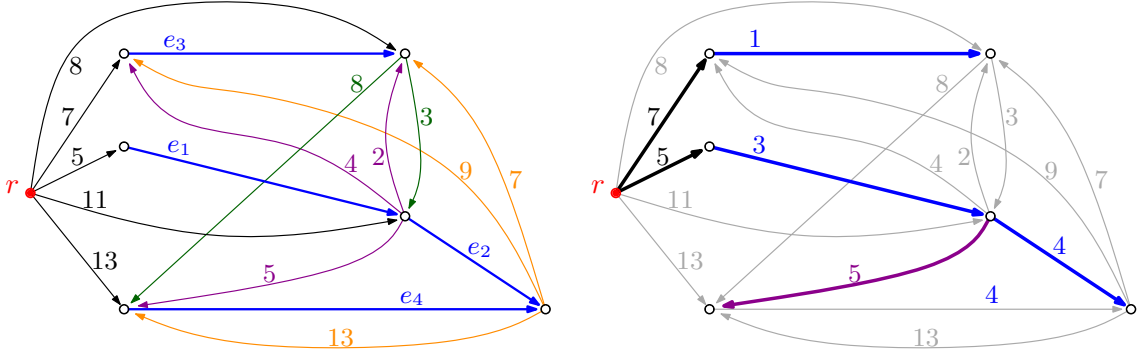}
		\caption{Left: The model graph for the network of Figure~\ref{fig:example}. Edges with infinite weight, like
			for example $\dart{r}{t_4}$ or $\dart{t_1}{t_2}$, are not drawn. 
			Right: the $\w$-shortest path tree in the model for the price function of Figure~\ref{fig:tree}: 
			$p(e_1)=3$, $p(e_2)=p(e_4)=4$, and $p(e_3)=1$.}
		\label{fig:model}
	\end{figure}

	Each edge $\dart{u}{v}$ in $E(\tilde G)$ that is not a priceable edge
	gets weight $d_\infty(u,v)$. That is, each edge $\dart{u}{v}$
	gets weight equal to the distance between $u$ and $v$ in $G-E_P$.
	This finishes the description of the model graph $\tilde G$.
	See Figure~\ref{fig:model}, left, for an example.
	This construction is similar to and inspired by the \emph{shortest paths graph model} 
	of Bouhtou et al.~\cite{bhkl}.
	
	The model graph $\tilde G$ has the same priceable edges as $G$.
	Consider any price function $p$. We \emph{claim that the $\w$-reduced trees 
	for $G$ and $\tilde G$ are the same}. That is, if $\tilde T$ denotes a
	$\w$-shortest path tree in $\tilde G$ and $R\tilde T$ denotes the $\w$-reduced tree obtained
	after contracting all non-priceable edges, then $R\tilde T= RT$.
	See Figure~\ref{fig:model}, right, for an example.

	Consider the subgraph $\tilde F$ of $G$ obtained by ``expanding" each shortest path
	of $\tilde T$: for each priceable edge $e_i$ in $\tilde T$  we put the same edge in $\tilde F$;
	for each non-priceable edge $\dart{u}{v}$ of $\tilde T$ we put
	in $\tilde F$ a shortest path of $G-E_P$ that connects $u$ to $v$.
	The graph $\tilde F$ is like a $\w$-shortest path forest spanning the vertices of $V(\tilde G)$.
	Any path in $\tilde T$ corresponds to a path in $\tilde F$ with the same composite weight $\w$.
	The reduced tree $R\tilde T$ is obtained from $\tilde F$ by contracting the fixed-cost edges $E(\tilde F)\cap E_F$.
	That is, $R\tilde F= R\tilde T$.
	
	Consider any edge $e_i\in E_P$.
	Since $T[r,t_i]$ is a $\w$-shortest path in $G$, we have 
	\[
		\w (T[r,t_i]) ~\prec~ \w (\tilde F[r,t_i]).
	\]
	Since $\tilde T [r,t_i]$ is a $\w$-shortest path in $\tilde G$, we have
	\[
		\w (\tilde T[r,t_i]) ~=~  \w (\tilde F[r,t_i]) ~\prec~ \w (T[r,t_i]).
	\]
	We thus conclude that
	\[
		\w (T[r,t_i]) ~=~ \w (\tilde F[r,t_i]).
	\]
	This means that 
	\[
		\chi(T[r,t_i]) ~=~ \chi(\tilde F[r,t_i]),
	\]
	and by (\ref{eq:chi}) we get that 
	\[
		E_P (T[r,t_i]) ~=~ E_P(\tilde F[r,t_i]) ~=~ E_P(\tilde T[r,t_i]).
	\]
	The same discussion for $s_i$ implies that
	\[
		E_P (T[r,s_i]) ~=~ E_P(\tilde F[r,s_i]) ~=~ E_P(\tilde T[r,s_i]).
	\]
	Since the same holds for each priceable edge $e_i$, it follows that $RT$ and $R\tilde T$.
	Indeed, if $e_j$ is a descendant of $e_i$, then $e_j\in E_P (T[r,s_i])$ and $e_j\in E_P (T[r,t_i])$,
	which means that $e_j\in E_P (\tilde T[r,s_i])$ and $e_j\in E_P (\tilde T[r,t_i])$,
	and we conclude that $e_i$ is a descendant of $e_j$ in $\tilde T$.
	This finishes the proof of the claim.
	
	Since $RT=R\tilde T$ and $R\tilde T$ can be computed in constant
	time because $\tilde G$ has constant size, the result follows. 
\end{proof}

Let $\T$ be the family of all possible reduced trees, over all possible graphs $G$.
Thus $\T$ is the family of rooted trees with at most $k$ edges where each edge has a distinct
label among $e_1,\dots,e_k$. It is clear that the number of such trees depends only on $k$,
and thus it is bounded by a constant in our case.  

Consider any reduced tree $R\in \T$. Each edge $e_i$ that appears in $R$ defines a \DEF{sequence of priceable edges}, 
denoted by $\sigma(e_i,R)$, which is the sequence of edges followed by the path in $R$ from the root to $e_i$.
The edge $e_i$ is the last edge of $\sigma(e_i,R)$. When $e_i$ is not in $R$ we define
$\sigma(e_i,R)$ as the empty sequence. Since each edge of $R$
defines a different sequence of edges, the tree $R\in \T$ defines $|E(R)|$ nonempty sequences.

For any nonempty sequence $\sigma=(e_{i_1},\dots, e_{i_a})$ of distinct priceable edges, 
we define 
\begin{align*}
	W_\infty(\sigma) & ~:=~ d_\infty(r,s_{i_1})+\sum_{1\le j\le a-1} d_\infty (t_{i_j},s_{i_{j+1}}),\\
	p(\sigma) &~:=~ \sum_{e\in \sigma} p(e) ~=~ \sum_{1\le j\le a} p(e_{i_j}),\\
	\chi(\sigma) &~:=~ \sum_{e\in \sigma} \chi(e) ~=~ \sum_{1\le j\le a} \chi(e_{i_j}) .
\end{align*}
We define an order $\prec_p$ among sequences of priceable edges in a reduced tree $R\in \T$.
For sequences $\sigma,\sigma'$ in $R$, it holds $\sigma\prec_p \sigma'$ if 
and only if $(p(\sigma),\chi(\sigma))\succ (p(\sigma'),\chi(\sigma'))$,
where $\succ$ denotes the lexicographic comparison.
Therefore
\[
	\sigma\prec_p \sigma' ~~\Longleftrightarrow~~ p(\sigma)>p(\sigma') \mbox{ or } 
							\left( p(\sigma)=p(\sigma') \mbox{ and } \chi(\sigma)> \chi(\sigma')	\right)
\]
Because of property (\ref{eq:chi}), $\prec_p$ is a linear order among the sequences of priceable edges
in a reduced tree. That is, for any two distinct sequences $\sigma$ and $\sigma'$, 
either $\sigma\prec_p \sigma'$ or $\sigma' \prec_p \sigma$.

Consider any two paths $\pi$ and $\pi'$ in $G$ with sequences of priceable edges $\sigma$ and $\sigma'$, respectively.
Because of the definition of $\w$ and $\prec_p$ we have
\begin{equation}\label{eq:5}
	\w (\pi)\prec \w(\pi') ~~\Longleftrightarrow~~ w_p(\pi) < w_p(\pi') \mbox{ or } 
							\left( w_p(\pi) = w_p(\pi') \mbox{ and } \sigma\prec_p \sigma'	\right)
\end{equation}

Consider any shortest path $\pi$ from $r$ to $t_i$. If the priceable edges that
appear along $\pi$ follow the sequence $\sigma=(e_{i_1},\dots, e_{i_a})$, where $e_{i_a}=e_i$,
we can then decompose $\pi$ into the subpaths 
\[
	\pi[r,s_{i_1}] ,~ e_{i_1},~ \pi[t_{i_1},s_{i_2}],e_{i_2}, \dots, \pi[t_{i_{a-1}},s_{i_a}],~ e_{i_{a}}.
\]
Since each of those subpaths is shortest, the length of $\pi$ is 
\[
	d_\infty(r,s_{i_1}) + p(e_{i_1}) + d_\infty(t_{i_1},s_{i_2}) + p(e_{i_2}) + \dots +
		 d_\infty(t_{i_{a-1}},s_{i_a}) + p(e_{i_{a}}),
\]	
and thus
\begin{equation}\label{eq:1}
		w_p(\pi) ~=~ W_\infty(\sigma) + p(\sigma).
\end{equation}

\section{Data structure for computing the revenue}
\label{sec:datastructure}

Consider a price function $p$ and let $T$ be a $\w$-shortest path tree.
For each edge $e_i\in E_P$, let
$V_{T}(e_i,p)$ be the set of vertices with the property that $e_i$ is the last edge of $E_P$
used by $T[r,v]$. It may be that $V_{T}(e_i,p)=\emptyset$. In particular this happens when $e_i$
does not appear in the shortest path tree $T$. We first argue that $V_{T}(e_i,p)$ is 
independent of the choice of $T$.

\begin{lemma}\label{le:equal}
	If $T$ and $T'$ are $\w$-shortest path trees, then,
	for each $e_i\in E_P$, it holds that $V_{T}(e_i,p)=V_{T'}(e_i,p)$.
\end{lemma}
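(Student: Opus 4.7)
The plan is to reduce the claim to Lemma~\ref{le:reduced} together with an intermediate statement that its proof provides. Specifically, the proof of Lemma~\ref{le:reduced} establishes not only $RT=RT'$, but also that $\w(T[r,v])=\w(T'[r,v])$ for every $v$, which via property~(\ref{eq:chi}) gives $E_P(T[r,v])=E_P(T'[r,v])$. So I would start by fixing a vertex $v$, writing $S_v:=E_P(T[r,v])=E_P(T'[r,v])$, and denoting the common reduced tree by $R:=RT=RT'$. It then suffices to prove that the last priceable edge on $T[r,v]$ equals the last priceable edge on $T'[r,v]$, since by definition $v\in V_T(e_i,p)$ exactly when this last edge is $e_i$.

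The key observation is that this ``last priceable edge'' is a function of the pair $(S_v,R)$ alone, independent of the particular $\w$-shortest path tree. If I walk from $r$ along $T[r,v]$ and list the priceable edges in the order encountered, say $e_{i_1},\dots,e_{i_a}$, then consecutive priceable edges are separated by subpaths of fixed-cost edges that collapse under the contraction defining $R$. Hence, in $R$, the edges $e_{i_1},\dots,e_{i_a}$ form a simple path starting at the root, and $S_v$ coincides with the edge set of that root-to-node path. In a tree, a root-emanating path is determined by its edge set, so the last traversed priceable edge $e_{i_a}$ is recovered as the unique edge of $S_v$ incident to the deepest vertex of this path in $R$; this recovery uses only $R$ and $S_v$.

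Applying the same analysis to $T'$, with the same $R$ and the same $S_v$, yields the same last priceable edge, so $v\in V_T(e_i,p)\Leftrightarrow v\in V_{T'}(e_i,p)$ for every $v$ and every $i$, giving $V_T(e_i,p)=V_{T'}(e_i,p)$. The main subtlety to justify carefully is the observation that the sequence of priceable edges encountered on $T[r,v]$ projects to a root-emanating simple path in $R$ after contraction: because $T$ is a tree, $T[r,v]$ is a simple path in $T$; contracting its fixed-cost edges preserves simplicity; and the image of $r$ remains the root of $R$, so the priceable edges of $T[r,v]$ must form precisely a root-to-node path in $R$, which is exactly what makes the notion of ``last'' well defined in terms of $(S_v, R)$.
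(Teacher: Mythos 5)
Your proposal is correct and follows essentially the same route as the paper: both start from the fact, extracted from the proof of Lemma~\ref{le:reduced}, that $E_P(T[r,v])=E_P(T'[r,v])$ for every $v$, and then argue that this common set determines the last priceable edge uniquely. The only difference is cosmetic: you identify that last edge as the deepest edge of the root-emanating path in the common reduced tree $R$, while the paper identifies it by comparing $E_P(T[r,t_i])$ with $E_P(T[r,t_j])$ and using that $T$ and $T'$ share the same priceable edges.
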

\begin{proof}
	Since both $T$ and $T'$ are $\w$-shortest path trees, we have seen in the
	proof of Lemma~\ref{le:reduced} that
	\[
		\forall v\in V(G):~~ E_P(T[r,v])= E_P(T'[r,v]).
	\]
	
	Consider any vertex $v\in V_{T}(e_i,p)$. Since $e_i\in E_P(T[r,v])= E_P(T'[r,v])$,
	there is some edge $e_j$ such that $v\in V_{T'}(e_j,p)$. 
	We want to show that $e_j=e_i$. This will imply that $V_{T}(e_i,p)\subseteq V_{T'}(e_i,p)$,
	and by symmetry we have equality, as stated.
	
	Since $v\in V_{T}(e_i,p)$ we have $E_P(T[r,v])=E_P(T[r,t_i])$.
	Similarly, we have $E_P(T'[r,v])=E_P(T'[r,t_j])$ because $v\in V_{T'}(e_j,p)$.
	Putting it together we have 
	\[
		E_P(T[r,t_i]) ~= E_P(T[r,v]) ~=~ E_P(T'[r,v]) ~=~ E_P(T'[r,t_j]) ~=~ E_P(T[r,t_j]).
	\]
	By Lemma~\ref{le:reduced}, $E(T)\cap E_P= E(T')\cap E_P$, which means
	that $e_j$ is also an edge of $T$. The equality $E_P(T[r,t_i])=E_P(T[r,t_j])$
	then implies that $e_i=e_j$.
\end{proof}

Since $V_{T}(e_i,p)$ is independent of the $\w$-shortest path tree $T$ that is used,
we will just denote it by $V(e_i,p)$.

\begin{lemma}\label{le:formula}
	Let $p$ be a price function and $R$ its $\w$-reduced tree. The revenue given by $p$ is
	\[
		\rho(p) ~=~ \sum_{e_i\in E(R)} p(\sigma(e_i,R)) \cdot \phi(V(e_i,p)).
	\]
\end{lemma}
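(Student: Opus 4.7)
The plan is to fix any $\w$-shortest path tree $T$, invoke Lemma~\ref{le:rho} to reduce the claim to showing that $\rho(T,p)$ equals the stated sum, and then partition $V(G)$ according to which priceable edge (if any) appears last along $T[r,v]$. Expanding the definitions,
\[
	\rho(T,p) ~=~ \sum_{v\in V(G)} \phi(v) \cdot p(E_P(T[r,v])),
\]
so the task reduces to understanding $E_P(T[r,v])$ and then rearranging the sum.

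By the definition of $V(e_i,p)$ and by Lemma~\ref{le:equal}, the family $\{V(e_i,p)\}_{e_i\in E_P}$ together with the set $V_0$ of vertices $v$ for which $T[r,v]$ carries no priceable edge forms a partition of $V(G)$. Vertices in $V_0$ contribute $0$ to $\rho(T,p)$ and may be ignored. For $v\in V(e_i,p)$, since $e_i$ is the last priceable edge on $T[r,v]$, the suffix $T[t_i,v]$ contains no priceable edge, and hence $E_P(T[r,v]) = E_P(T[r,t_i])$. By the very definition of the reduced tree, the priceable edges on $T[r,t_i]$, in order, are exactly the sequence $\sigma(e_i,R)$. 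Thus $p(E_P(T[r,v])) = p(\sigma(e_i,R))$, a quantity that does not depend on the particular choice of $v\in V(e_i,p)$.

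Substituting and pulling this constant out of each inner sum gives
\[
	\rho(T,p) ~=~ \sum_{e_i\in E_P} p(\sigma(e_i,R)) \cdot \phi(V(e_i,p)).
\]
Finally, for $e_i\notin E(R)$ the edge $e_i$ does not appear in $T$, so $V(e_i,p)=\emptyset$ and the sum restricts to $e_i\in E(R)$, yielding the claimed formula.

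The only mildly delicate step is the equality $E_P(T[r,v]) = \sigma(e_i,R)$ for $v\in V(e_i,p)$; everything else is accounting. This equality relies on the fact that in a tree the path $T[r,v]$ decomposes as $T[r,t_i]$ concatenated with $T[t_i,v]$, so that once the last priceable edge $e_i$ has been traversed nothing more priceable can appear, together with the definition of the reduced tree, which encodes the root-to-$e_i$ path in $R$ as precisely the priceable edges along $T[r,t_i]$.
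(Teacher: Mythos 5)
Your proof is correct and follows essentially the same route as the paper: fix a $\w$-shortest path tree $T$, reduce to $\rho(T,p)$ via Lemma~\ref{le:rho}, partition $V(G)$ into $V_0$ and the sets $V(e_i,p)$, and use $p(T[r,v])=p(\sigma(e_i,R))$ on each part. You merely spell out in slightly more detail (via the decomposition $T[r,v]=T[r,t_i]$ followed by $T[t_i,v]$) the step the paper states without justification.
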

\begin{proof}
	Note that, if $i\not= j$, then $V(e_i,p)$ and $V(e_j,p)$ are disjoint by definition.
	Let us set $V_0=V\setminus (\cup_{e_i\in E(R)} V(e_i,p))$.
	The vertices of $V_0$ do not contribute anything to the revenue because the corresponding
	paths do not use any priceable edges.
	
	Let $T$ be a $\w$-shortest path tree and let $R$ be the $\w$-reduced tree. We then have $p(T[r,v])=0$ for all $v\in V_0$.
	Using the definition of $\rho(T,v)$ and that $V_0,V(e_1,p),\dots,V(e_k,p)$ is partition of $V(G)$
	we have
	\begin{align*}
		\rho(T,p) & ~=~ \sum_{v\in V(G)} \phi(v) \cdot p(T[r,v]) \\
				  & ~=~ \sum_{v\in V(G)\setminus V_0} \phi(v) \cdot p(T[r,v]) \\
				  & ~=~ \sum_{e_i\in E(R)} ~ \sum_{v \in V(e_i,p)} \phi(v) \cdot p(T[r,v]) \\
				  & ~=~ \sum_{e_i\in E(R)} ~ \sum_{v \in V(e_i,p)} \phi(v) \cdot p(\sigma(e_i,R))\\
				  & ~=~ \sum_{e_i\in E(R)} p(\sigma(e_i,R)) \cdot \phi(V(e_i,p)).
	\end{align*}
	where in the fourth equality we have used that $p(T[r,v])= p(\sigma(e_i,R))$ for all vertices of $V(e_i,p)$.
	Since $\rho(p)=\rho(T,p)$ because of Lemma~\ref{le:rho}, we conclude that 
	\begin{equation}
		\rho(p) ~=~ \sum_{e_i\in E(R)} p(\sigma(e_i,R)) \cdot \phi(V(e_i,p)).
	\end{equation}
\end{proof}

Our objective is to compute $\phi(V(e_i,p))$ efficiently using data structures.
Since all vertices in $V(e_i,p)$ use the same priceable edges, 
this will lead to an efficient computation of the revenue.

\begin{lemma}\label{structure2}
	Assume that $k\ge 2$ is a constant.
	Consider a reduced tree $R\in \T$ and an edge $e_i\in E(R)$.
	In time $O(m+n\log^{k-1} n)$ we can construct a data structure with the following property: 
	given a price function $p$ with the property that its $w_p$-reduced tree is $R$,
	we can obtain $\phi(V(e_i,p))$ in $O(\log^{k-1} n)$ time.
\end{lemma}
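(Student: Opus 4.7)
The plan is to reduce the computation of $\phi(V(e_i,p))$ to a single orthogonal range query of dimension $d\le k$. Write $\sigma_j:=\sigma(e_j,R)$ for $e_j\in E(R)$. A vertex $v$ lies in $V(e_i,p)$ exactly when the ``option'' $\sigma_i$ is the $\w$-winner among the $|E(R)|+1$ candidates for the priceable-edge sequence on the $\w$-shortest path from $r$ to $v$: the sequences $\sigma_j$ with $e_j\in E(R)$ and the empty sequence (no other sequences occur, since the $\w$-reduced tree is $R$). By~(\ref{eq:1}), the weight of the best path through $\sigma_j$ is
\[
   C_j(v)\;:=\;W_\infty(\sigma_j)+p(\sigma_j)+d_\infty(t_j,v),
\]
and $C_0(v):=d_\infty(r,v)$ for the empty sequence. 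By~(\ref{eq:5}), $\sigma_i$ beats $\sigma_j$ iff $C_i(v)<C_j(v)$, or $C_i(v)=C_j(v)$ and $\sigma_i\prec_p\sigma_j$. Rearranging gives a single inequality $d_\infty(t_i,v)-d_\infty(t_j,v) \le \alpha_{ij}$, strict when $\sigma_j\prec_p\sigma_i$ and non-strict when $\sigma_i\prec_p\sigma_j$, where $\alpha_{ij}$ depends only on $R,p,i,j$; analogously for the empty-sequence constraint.

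This leads to a natural embedding into Euclidean space. Run Dijkstra on $G$ with the weights $w_\infty$ from each source in $\{r,t_1,\dots,t_k\}$ in $O(m+n\log n)$ total time. Map each vertex $v$ with $d_\infty(t_i,v)<\infty$ to
\[
   \Phi_i(v)\;:=\;\bigl(d_\infty(t_i,v)-d_\infty(u,v)\bigr)_{u\in U_i}\;\in\;\RR^d,
\]
where $U_i:=\{t_j:e_j\in E(R)\setminus\{e_i\}\}\cup\{r\}$ and $d=|E(R)|\le k$, and assign weight $\phi(v)$. Build the range-searching structure of Theorem~\ref{theo:range1} on these weighted points, padding to dimension $2$ when $d=1$; the construction time is $O(n\log^{k-1}n)$. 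Given $p$, the constants $\alpha_{ij}$ and the strict/non-strict flag for each inequality are computed in $O(1)$ time, forming an axis-parallel box $R(p)\subset\RR^d$ such that $\Phi_i(v)\in R(p)$ iff $v\in V(e_i,p)$. A single range query returns $\phi(V(e_i,p))$ in $O(\log^{k-1}n)$ time.

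The main delicacy is tying the lexicographic tie-breaking in the definition of $\w$ to the open/closed boundary of $R(p)$: whenever $\sigma_i\prec_p\sigma_j$ the $j$-th facet must be non-strict, otherwise strict (and the empty-sequence facet is always non-strict, since $p>0$ together with $\chi(\sigma_i)>0$ forces $\sigma_i\prec_p\emptyset$). The secondary nuisance is vertices $v$ for which $d_\infty(t_j,v)=\infty$ for some $j$ with $t_j\in U_i$: we declare the corresponding coordinate of $\Phi_i(v)$ to be $-\infty$, so that the constraint against $\sigma_j$ is automatically satisfied (the path through $\sigma_j$ is simply unavailable), while vertices with $d_\infty(t_i,v)=\infty$ cannot lie in $V(e_i,p)$ and are excluded from the point set outright.
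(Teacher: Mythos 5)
Your proposal is correct and follows essentially the same route as the paper's proof: map each vertex to a point of distance differences computed from $d_\infty$, translate ``$\sigma_i$ is the $\w$-winner'' into an axis-parallel box whose facets are open or closed according to $\prec_p$, and answer with one query to the structure of Theorem~\ref{theo:range1}. The only differences are cosmetic (you fold the $W_\infty$ constants into the query thresholds rather than the coordinates, use dimension $|E(R)|\le k$, and explicitly treat unreachable vertices), so nothing further is needed.
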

\begin{proof}
	For each vertex $v\in V(G)$ we define a point $p_v\in \RR^{k}$ whose $j$th coordinate is
	\[
		p_v(j):=\begin{cases}
				W_\infty(\sigma(e_i,R)) + d_\infty(t_i,v) - W_\infty(\sigma(e_j,R)) - d_\infty(t_j,v)\quad & 
																			\mbox{if $j\not= i$;}\\
				W_\infty(\sigma(e_i,R)) + d_\infty(t_i,v) - d_\infty(r,v) & \mbox{if $j=i$.}
				\end{cases}
	\]
	Let $P$ be the set of points $\{ p_v \mid v \in V(G)\}$.
	To each point $p_v\in P$ we assign the weight $\varphi(p_v):=\phi(v)$. 
	We then store the point set $P$ using the data structure for range searching of Theorem~\ref{theo:range1}.
	This finishes the description of the data structure. 
	
	The construction of the data structure takes $O(m+n\log^{k-1} n)$ time. 
	We first run a shortest path tree algorithm in $G-E_P$ from $r$ and from each endpoint of the edges in $E_P$. 
	This takes $O(k(m+n\log n))=O(m+n\log n)$ time.
	With this information we can obtain each coordinate of each point in constant time,
	and thus we construct $P$ in $O(n)$ time. The construction of the data structure of Theorem~\ref{theo:range1}
	takes $O(n\log^{k-1} n)$ time because we have $k$-dimensional points.
	
	Consider a price function $p$ and let $T$ be a $\w$-shortest path tree.
	By assumption, $RT=R$. 
	We next explain how to recover $\phi(V(e_i,p))$.
	For every $j=1,\dots ,k$, define the interval 
	\[
		I(j)= \begin{cases}
					\left( -\infty, p(\sigma(e_j,R))- p(\sigma(e_i,R))\right] & \mbox{if $i\not=j$ and $\sigma(e_i,R)\prec_p \sigma(e_j,R)$;}\\
					\left( -\infty, p(\sigma(e_j,R))- p(\sigma(e_i,R))\right) & \mbox{if $i\not=j$ and $\sigma(e_i,R)\not\prec_p \sigma(e_j,R)$;}\\
					\left( -\infty, -p(\sigma(e_i,R)) \right]									  & \mbox{if $i=j$.}
				\end{cases}
	\]
	Consider a vertex $v\in V(G)$. The path $T[r,v]$ can be disjoint from $E_P$ or follow 
	one of the sequences $\sigma(e_1,R),\dots,\sigma(e_k,R)$.
	Using the relation in equation (\ref{eq:5}), we see that the path $T[r,v]$ 
	follows the sequence $\sigma(e_i,R)$ if and only if the following conditions hold:
	\begin{align*}
		& w_p(T[r,t_i])+ d_\infty (t_i,v) ~\le~ d_\infty (r,v),\\
		&\forall j\not= i, \sigma(e_i,R)\prec_p \sigma(e_j,R) : \\
			& ~~~~~~~~~~~~~~~~~~~~~~~~~~~~~ w_p(T[r,t_i])+ d_\infty (t_i,v) ~\le~ w_p(T[r,t_j])+ d_\infty (t_j,v);
		\\
		&\forall j\not= i, \sigma(e_i,R)\not\prec_p \sigma(e_j,R) : \\
			& ~~~~~~~~~~~~~~~~~~~~~~~~~~~~~ w_p(T[r,t_i])+ d_\infty (t_i,v) ~<~ w_p(T[r,t_j])+ d_\infty (t_j,v).
	\end{align*}
	Because of equation (\ref{eq:1}), we have, for each $j=1,\dots,k$,
	\[
		w_p(T[r,t_j]) ~=~ p(\sigma(e_j,R)) + W_\infty(\sigma(e_j,R)).
	\]
	and thus the conditions are equivalent to: 
	\begin{align*}
		& p(\sigma(e_i,R)) + W_\infty(\sigma(e_i,R)) + d_\infty (t_i,v) ~\le~ d_\infty (r,v),\\
		&\forall j\not= i, \sigma(e_i,R)\prec_p \sigma(e_j,R) : \\
			& ~~~~~~~~~~ p(\sigma(e_i,R)) + W_\infty(\sigma(e_i,R))+ d_\infty (t_i,v) ~\le ~
					p(\sigma(e_j,R)) + W_\infty(\sigma(e_j,R))+ d_\infty (t_j,v);
		\\
		&\forall j\not= i, \sigma(e_i,R)\not\prec_p \sigma(e_j,R) : \\
			& ~~~~~~~~~~ p(\sigma(e_i,R)) + W_\infty(\sigma(e_i,R))+ d_\infty (t_i,v) ~< ~
					p(\sigma(e_j,R)) + W_\infty(\sigma(e_j,R))+ d_\infty (t_j,v).
	\end{align*}	
	Reordering, we obtain that $v\in V(e_i,p)$ if and only if
	\begin{align*}
		& W_\infty(\sigma(e_i,R))+ d_\infty (t_i,v) - d_\infty (r,v) ~\le~ -p(\sigma(e_i,R)),\\
		&\forall j\not= i, \sigma(e_i,R)\prec_p \sigma(e_j,R) : \\
			& ~~~~~~~~~~ W_\infty(\sigma(e_i,R))+ d_\infty (t_i,v) - W_\infty(\sigma(e_j,R)) - d_\infty (t_j,v) ~\le~
					p(\sigma(e_j,R))- p(\sigma(e_i,R));
		\\
		&\forall j\not= i, \sigma(e_i,R)\not\prec_p \sigma(e_j,R) : \\
			& ~~~~~~~~~~ W_\infty(\sigma(e_i,R))+ d_\infty (t_i,v) - W_\infty(\sigma(e_j,R)) - d_\infty (t_j,v) ~< ~
		p(\sigma(e_j,R))- p(\sigma(e_i,R)).
	\end{align*}
	This condition is equivalent to 
	\begin{align*}
		\mbox{for $j=1,\dots,k$} : ~ p_v(j) \in I(j).
	\end{align*}
	We conclude that $v\in V(e_i,p)$ if and only if $p_v\in \prod_{j} I(j)$.
	We can then recover $\phi(V(e_i,p))=\varphi\left( P\cap \prod_{j} I(j) \right)$ by querying the data structure 
	for $\varphi \left( P\cap \prod_{j} I(j) \right)$.
	
	Given a price function $p$, we can compute the values $p(\sigma(e_j,R))$, for $j=1,\dots, k$, in
	$O(1)$ time. With this information we can compute the extremes of the intervals $I(j)$ and
	query the data structure for range searching in $O(\log^{k-1} n)$ time.
\end{proof}

\begin{theorem}
	Assume that $k\ge 2$ is a constant.
	Consider an instance to $\StackSPT$ with $n$ vertices, $m$ edges, and $k$ priceable edges.
	In time $O(m+n\log^{k-1} n)$ we can construct a data structure with the following property: 
	given a price function $p$, the revenue $\rho (p)$ can be obtained in $O(\log^{k-1} n)$ time.
\end{theorem}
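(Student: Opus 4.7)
The plan is to assemble the theorem by combining the three preceding results: Lemma~\ref{structure1} (fast retrieval of the reduced tree), Lemma~\ref{structure2} (fast retrieval of $\phi(V(e_i,p))$ for a fixed reduced tree $R$ and edge $e_i$), and Lemma~\ref{le:formula} (the formula expressing $\rho(p)$ as a sum indexed by $E(R)$). Since the family $\T$ of possible reduced trees depends only on $k$ and thus has constant size, we can afford to build one copy of the Lemma~\ref{structure2} data structure for every pair $(R,e_i)$ with $R\in\T$ and $e_i\in E(R)$.

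More precisely, in the preprocessing phase I would first build the model-graph data structure of Lemma~\ref{structure1} in $O(m+n\log n)$ time. Then, for each of the $O(1)$ pairs $(R,e_i)$ with $R\in\T$ and $e_i\in E(R)$, I would build the corresponding range-searching structure of Lemma~\ref{structure2} in $O(m+n\log^{k-1}n)$ time; note that the single-source shortest path computations in $G-E_P$ from $r$ and from the endpoints of the priceable edges (which dominate the linear part of the cost) need to be performed only once and can be shared across all the $O(1)$ instances. The total preprocessing cost is therefore $O(m+n\log^{k-1}n)$.

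At query time, given a price function $p$, I would first use Lemma~\ref{structure1} to recover the reduced tree $R=RT$ in $O(1)$ time. Then, for each of the at most $k$ edges $e_i\in E(R)$, I would query the appropriate data structure from Lemma~\ref{structure2} to obtain $\phi(V(e_i,p))$ in $O(\log^{k-1}n)$ time; computing $p(\sigma(e_i,R))$ takes only $O(1)$ time since $|\sigma(e_i,R)|\le k$. Finally, I would plug these values into the expression
\[
    \rho(p) ~=~ \sum_{e_i\in E(R)} p(\sigma(e_i,R))\cdot \phi(V(e_i,p))
\]
from Lemma~\ref{le:formula}. Since $k$ is constant, the total query time is $O(\log^{k-1}n)$.

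Conceptually this is a straightforward assembly, so there is no real obstacle to overcome here; the only point that requires mild care is verifying that Lemma~\ref{structure2} can be applied for the specific reduced tree $R=RT$ returned by Lemma~\ref{structure1} (its hypothesis is precisely that the $\w$-reduced tree of $p$ equals $R$, which holds by construction), and that sharing the common shortest-path preprocessing across the constantly many instances does not inflate the overall construction time beyond $O(m+n\log^{k-1}n)$.
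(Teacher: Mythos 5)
Your proposal is correct and matches the paper's own proof essentially verbatim: build the structure of Lemma~\ref{structure1} plus one copy of the Lemma~\ref{structure2} structure per pair $(R,e_i)$ (constantly many, since $|\T|$ depends only on $k$), then at query time recover $R$, query each $DS(R,e_i)$, and evaluate the formula of Lemma~\ref{le:formula}. Your remark about sharing the shortest-path computations is fine but not even needed, since the $O(1)$ instances of Lemma~\ref{structure2} already give total preprocessing $O(m+n\log^{k-1}n)$ without it.
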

\begin{proof}
	We start constructing the data structure of Lemma~\ref{structure1}, so that we can
	quickly compute the $\w$-reduced tree for any given price function.
	For each reduced tree $R\in \T$ and each priceable edge $e_i\in E(R)$ 
	we construct the data structure from Lemma~\ref{structure2} and denote it by $DS(R,e_i)$.
	This finishes the construction of the data structure.
	The time bound follows from the time bounds of Lemmas~\ref{structure1} and~\ref{structure2}.
	
	Consider a price function $p$. Because of Lemma~\ref{le:formula}, we have
	\begin{equation*}
		\rho(p) ~=~ \sum_{e_i\in E(R)} p(\sigma(e_i,R)) \cdot \phi(V(e_i,p)).
	\end{equation*}
	The data to apply this formula can be recovered from the data structures.
	Firstly, we use the data structure of Lemma~\ref{structure1} to compute the $\w$-reduced tree $R$ for $p$.
	For each $e_i\in E(R)$, we query $DS(R,e_i)$ to recover $\phi(V(e_i,p))$.
	Finally, we compute $p(\sigma(e_i,R))$ for each $e_i\in E(R)$.
	Overall, we use $O(1)$ queries to the data structures and each such query takes $O(\log^{k-1} n)$ time.
	The result follows.
\end{proof}

\begin{corollary}
	Let $k\ge 2$ be a constant.
	The problem \StackSPT with $n$ vertices, $m$ edges, and $k$ priceable edges can be solved in $O(m+n^k \log^{k-1} n)$ time.
\end{corollary}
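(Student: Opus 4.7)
The plan is to combine the data structure of the preceding theorem with the enumeration lemma of Bil\`o et al.~\cite{bgpw-08}, which reduces the search for an optimal price function to evaluating a set $\mathcal{P}$ of $O(n^k)$ explicit candidate price functions. Given these two ingredients, the corollary follows by direct composition.

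First I would construct the data structure of the preceding theorem, in $O(m+n\log^{k-1}n)$ time. Then I would enumerate $\mathcal{P}$: for each $p\in\mathcal{P}$ I would query the data structure to obtain $\rho(p)$ in $O(\log^{k-1}n)$ time and retain the best candidate (together with a pointer to the argmax if the optimal price function itself is required, which only adds $O(1)$ bookkeeping per query). Summing, the preprocessing costs $O(m+n\log^{k-1}n)$ and the $|\mathcal{P}|=O(n^k)$ queries cost $O(n^k\log^{k-1}n)$, which adds up to the claimed $O(m+n^k\log^{k-1}n)$.

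The one part I would not reprove from scratch is the existence and explicit construction of $\mathcal{P}$. The standard idea, carried out in~\cite{bgpw-08}, is that at an optimum each priceable edge $e_i$ can be assumed to take one of $O(n)$ ``threshold'' values, namely those prices at which the follower's preferred path for some specific vertex is on the verge of switching between a route using $e_i$ and a route avoiding it; taking the Cartesian product over the $k$ priceable edges yields $O(n^k)$ candidate tuples, and the footnote in the introduction observes that this extends verbatim from uniform demands to arbitrary nonnegative $\phi$. There is no real obstacle in this step: the improvement over the $O(n^k(m+n\log n))$ bound of~\cite{bgpw-08} comes entirely from replacing their per-candidate shortest-path-tree recomputation by one query into our data structure.
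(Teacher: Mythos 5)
Your proposal is correct and follows exactly the paper's argument: cite the $O(n^k)$ candidate price functions from Bil\`o et al.\ and evaluate each in $O(\log^{k-1} n)$ time using the data structure of the preceding theorem, for a total of $O(m+n^k\log^{k-1} n)$. The extra detail you give about how the candidate set arises (threshold prices, extension to general demands) is consistent with the paper's footnote and does not change the argument.
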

\begin{proof}
	As discussed in the introduction, Bil{\`o} et al.~\cite{bgpw-08} show how to solve \StackSPT
	by finding the revenue of $O(n^k)$ price functions. Using the theorem we, can
	find the revenue for all those price functions in $O(n^k \log^{k-1} n)$ time after $O(m+n \log^{k-1} n)$ preprocessing time.
\end{proof}

\bibliographystyle{plain}
\bibliography{bibliography}
\end{document}